\newcommand{\be}{\begin{equation}}
\newcommand{\ee}{\end{equation}}
\newcommand{\Tr}{\mathrm{Tr}}
\newcommand{\Con}{\mathrm{Con}}
\newtheorem{theorem}{Theorem}[section]
\theoremstyle{definition}
\theoremstyle{remark}
\numberwithin{equation}{section}
\begin{document}
\title{Quantum discord of X-states as optimization of one variable function}
\author{Naihuan Jing$^*$, Bing Yu}
\address{Jing: School of Mathematics, South China University of Technology and Department of Mathematics,
North Carolina State University}
\address{Yu: School of Mathematics, South China University of Technology}

\thanks{{\scriptsize
\hskip -0.4 true cm MSC (2010): Primary: 81P40; Secondary: 81Qxx.
$*$Corresponding author, jing@ncsu.edu}}
\keywords{Quantum discord, quantum correlations, X-states, von Neumann measurements, optimization on manifolds}

\maketitle

\begin{abstract}
We solve the quantum discord completely as an optimization of certain one variable function
for arbitrary two qubit X state. Exact solutions of the quantum discord
are obtained for several nontrivial regions of
the five parametric space for the quantum state. Exceptional solutions are determined via
an iterative algorithm.
\end{abstract}

\section{Introduction}

A quantum state can be studied through entanglement, separability, classical correlation and quantum correlation
\cite{OZ, HV, V, MPSVW, GBGZ, St}.
The classical and quantum correlation of a quantum state can be quantified by the notion of quantum discord.
If $\rho^{ab}$ is a bipartite quantum state, the quantum mutual information is defined by
\begin{align}
\mathcal I(\rho)=S(\rho^a)+S(\rho^b)-S(\rho^{ab})
\end{align}
where $S(\rho)=-\mathrm{Tr}\rho\log_2(\rho)$ is the von Neumann entropy of the quantum state.

To reveal the nature of quantum correlation, Olliver and Zurek \cite{OZ} proposed to use
the entropy of measurement-based conditional density operators to study the classical correlation.
The von Neumann measurement is an ensemble of projectors $B_k$ such that $\sum_kB_k=I$ where
$B_k$ are mutually orthogonal idempotents. When the measurement $\{B_k\}$ is performed locally on one party of the system $\rho^{ab}$, the quantum state $\rho$ is changed to
\begin{align}\label{e:rhok}
\rho_k=\frac1{p_k}\Tr_b(1\otimes B_k)\rho(1\otimes B_k)
\end{align}
with the probability $p_k=\Tr(1\otimes B_k)\rho(1\otimes B_k)=\Tr\rho(1\otimes B_k)$. The quantum conditional entropy is given by
\begin{align}
S(\rho|\{B_k\})=\sum_kp_kS(\rho_k).
\end{align}

The quantum mutual information with respect to $\{B_k\}$ is then defined as
\begin{align}
\mathcal I(\rho|\{B_k\})=S(\rho^a)-S(\rho|\{B_k\}),
\end{align}
and the classical correlation is measured by the quantity \cite{OZ}
\begin{align}
\mathcal C(\rho)=sup_{\{B_k\}}\mathcal I(\rho|\{B_k\}).
\end{align}
Then the quantum discord is simply defined as the difference
\begin{align}
\mathcal Q(\rho)=\mathcal I(\rho)-\mathcal C(\rho)
\end{align}

As the quantum discord is given by the supremum over the set of von Noumann measurements, mathematically
the problem is equivalent to optimization of a multi-variable function
with five parameters over a closed domain implicitly defined.
In \cite{L}, Luo found the first exact formula for the Bell diagonal state,
which corresponds to a cross-section of the general 5-dimensional region by a 3-dimensional space.
The problem for general X-states has been further studied in various works \cite{BC, WS, FWBAC, DVB}.
Notably Ali et al \cite{A} gave the ARA algorithm to calculate the quantum discord of
the X-type state by reducing the problem to an optimization of certain three variable function. Other methods
\cite{GA, Y, GGZ, GAJ, LMXW, CZYYO, Shi, LWF, WZ, H} have also been proposed to solve optimization of
multi-variable functions in different parametrization and then
claimed that in most
cases the maximum value is given by two or three
possible critical or special points. The quantum discord has also been studied using
other more general POVM measurements \cite{MHR, S, Y, YF}. So far the best record
has been some reduction to extremal problems of
two variable functions.

However,
these methods have not completely solved the problem of the quantum discord except Luo's solution
for the Bell diagonal state \cite{L} which gives an {\it exact} or {\it analytical} formula for the quantum discord.
In fact, most of the current methods are useful for many of the situations but not all cases. Their
main ideas are to
solve the optimization problem by writing down a system of partial differential equations
and then claimed that the solutions are given by those of the system.

There are three questions needed to be addressed for these methods to be successful for the quantum discord. First of all, it is impossible to
solve these systems of partial differential equations analytically. Secondly, even if one manages to solve the system of
the partial differential equations numerically, the solutions may still turn out to be some local extremal but the global ones,
since most of the currently available methods did not discuss the situation on the boundary. Therefore these differential equations
could miss the important solutions of the quantum discord. Thirdly, even the numerical solutions to those
systems of partial differential systems could be problematic, since they are usually of high dimensional, and it was observed
by Huang \cite{H} that the computational cost could grow exponentially with the dimension of the Hilbert space. The authors
failed to locate a practical numerical method in the literature to solve this problem either, as they
are optimization of multivariable functions and no satisfactory numerical methods are available for such complicated multivariable
(often ill-defined) partial differential equations.
For example, Ex. 2 in Sect. 3 cannot be solved by
any of the currently available algorithms. In fact, its solution was only obtained by
examining the graph of the quantum discord (cf. \cite{WZ}), thus its accuracy is at mercy of raw eyes.

The reason behind this problem and trouble is perhaps that the present available methods more or less use the
Lagrange multiplier, which only gives necessary conditions
for the interior critical points (see any standard calculus book).
Those exceptional solutions such as Ex. 2 in Sect. 3, on the other hand, appear not
at interior points but on the boundary. To completely understand the physical meaning of the quantum discord,
it is necessary to give a rigorous and satisfactory solution of the associated optimization problem.

The goal of this paper is to solve the quantum discord of the general X-state for all situations
on the whole domain in three steps. First, we propose
a new method to reduce
the associated optimization problem into that of a {\it one-variable} entropy-like function
on the closed interval $[0, 1]$ (cf. Theorem \ref{t:1}), which in principle solves the problem of
the quantum discord.
Second, we give exact and analytical solutions for
the general X-type state for several nontrivial regions of the parameters
and prove rigorously that the answer is mostly given by the end-points of $[0, 1]$ in Theorem \ref{t:2}.
Third, for the exceptional cases not covered by the second step and when the maximum is at an interior point of $(0, 1)$, we have formulated
an effective algorithm to pin down the exotic solutions using Newton's formula (see Theorem \ref{Th3}).
We remark that the third step covers
the situation when all previous methods cannot solve the quantum discord.
Combining with the end-points, the iterative formula has
completely resolved the
problem of the quantum discord for the general X-type state. As an example to demonstrate the power of our method, we will solve
the aforementioned Ex. 2 accurately (by six simple iterations) without resorting to its graph.

We also compare our results with the formulas of \cite{S}, where the authors have used the concurrence to compute the
quantum discord of rank two mixed states. It is verified that their formulas correspond to our special cases of
either $F(1)$ or $F(0)$.

\section{quantum discord for X states}
Let $\{\sigma_i\}_{i=1}^3$ be the standard Pauli spin matrices such that
$\{\sigma_i, \sigma_j\}=2\delta_{ij}$. It is well-known
that any two-qubit state is local unitary equivalent to the Bloch form with diagonal quadratic terms
in $\sigma_i\otimes\sigma_j$. In this paper, we restrict ourselves with the general X-type quantum state
\begin{equation}\label{Eq2.1}
\rho=\frac{1}{4}(I\otimes I+r\sigma_3\otimes I+I\otimes s\sigma_3+\sum_{i=1}^3c_i\sigma_i\otimes \sigma_i),
\end{equation}
where $r,s\in[-1,1]$ and $|c_{i}|\leqslant1$. The eigenvalues of $\rho$ are given by
\begin{gather}\label{Eq2.2}
\lambda_{1,2}=\frac{1}{4}(1-c_3\pm\sqrt{(r-s)^2+(c_1+c_2)^2}),\notag \\
\lambda_{3,4}=\frac{1}{4}(1+c_3\pm\sqrt{(r+s)^2+(c_1-c_2)^2}).
\end{gather}
Therefore the quantum state $\rho$ is defined over the following
closed region $R(\rho)$ of $\mathbb R^5$:
\begin{equation}\label{e:region}
\begin{aligned}
1-c_3&\geqslant \sqrt{(r-s)^2+(c_1+c_2)^2},\\
1+c_3&\geqslant \sqrt{(r+s)^2+(c_1-c_2)^2}.
\end{aligned}
\end{equation}
The manifold $R(\rho)$'s boundary satisfies the following obvious
constraints: $|c_3|\leqslant 1$,
$c_1^2+c_2^2\leqslant 1+c_3^2$, and $r^2+s^2\leqslant 1+c_3^2$.
$R(\rho)$ is also contained in the region bounded by the following hyperplanes:
\begin{align}
1-c_3&\geqslant |c_1+c_2|, \qquad 1-c_3\geqslant |r-s|,\\
1+c_3&\geqslant |c_1-c_2|, \qquad 1+c_3\geqslant |r+s|.
\end{align}
In particular, $1-c_3\geqslant\max\{|r|, |s|\}$ and $1-c_3\geqslant\max\{|c_1|, |c_2|\}$

To evaluate the mutual information $I(\rho)$, we need the marginal states of $\rho$:
\begin{align}\label{Eq2.3}
\rho^a&=\Tr_b\rho=diag(\frac{1}{2}(1+r), \frac12 (1-r)), \notag\\
\rho^b&=\Tr_a\rho=diag(\frac{1}{2}(1+s), \frac12 (1-s)).
\end{align}

Thus the quantum mutual information of $\rho$ is given by
\begin{equation}\label{Eq2.4}
\begin{split}
\mathcal I(\rho)=&S(\rho^a)+S(\rho^b)-S(\rho)\\
=&2-\frac{1}{2}(1+r)\log_2(1+r)-\frac{1}{2}(1-r)\log_2(1-r)\\
&-\frac{1}{2}(1+s)\log_2(1+s)-\frac{1}{2}(1-s)\log_2(1-s)+\sum_{i=1}^{4}\lambda_i\log_2\lambda_i\\
\end{split}
\end{equation}

Next we evaluate the classical correlation $\mathcal C(\rho)$. Any von Neumann measurement can be
written as $\{B_k=V|k\rangle\langle k|V^\dag: k=0,1\}$ for some $V\in \mathrm{SU}(2)$,
and each unitary matrix $V\in \mathrm{SU}(2)$ is parameterized up to a phase factor
by the 4-dimensional unit sphere $t^2+\sum_{i=1}^3y_i^2=1$
such that $V=tI+i\sum_{i=1}^3y_i\sigma_i$, $t, y_i\in\mathbb R$.

After the measurement $\{B_k\}$, the ensemble $\{\rho_k, p_k\}$ is given by \eqref{e:rhok}.

For $V=tI+i\sum_{i=1}^3y_i\sigma_i$, it follows from symmetry that
\begin{equation}\label{Eq2.7}
\begin{split}
V^\dag\sigma_1V=(t^2+y_1^2-y_2^2-y_3^2)\sigma_1+2(ty_3+y_1y_2)\sigma_2+2(-ty_2+y_1y_3)\sigma_3, \\
V^\dag\sigma_2V=(t^2+y_2^2-y_3^2-y_1^2)\sigma_2+2(ty_1+y_2y_3)\sigma_3+2(-ty_3+y_1y_2)\sigma_1, \\
V^\dag\sigma_3V=(t^2+y_3^2-y_1^2-y_2^2)\sigma_3+2(ty_2+y_1y_3)\sigma_1+2(-ty_1+y_2y_3)\sigma_2,
\end{split}
\end{equation}
and the expansion coefficients of $V^\dag\sigma_iV$ lie on the unit 3-sphere, or more precisely
the transition matrix $\in\mathrm{SO}(3)$.

Introduce new variables
$$z_1=2(-ty_2+y_1y_3), \quad z_2=2(ty_1+y_2y_3), \quad z_3=t^2+y_3^2-y_1^2-y_2^2.$$
Then $z_1^2+z_2^2+z_3^2=1$, and we have that
\begin{align}
B_0\sigma_iB_0&=z_iB_0\\
B_1\sigma_iB_1&=-z_iB_1
\end{align}
Therefore we obtain that
\begin{align}
\rho_0&=\frac{(1+sz_3)I+c_1z_1\sigma_1+c_2z_2\sigma_2+(r+c_3z_3)\sigma_3}{2(1+sz_3)}\\
\rho_1&=\frac{(1-sz_3)I-c_1z_1\sigma_1-c_2z_2\sigma_2+(r-c_3z_3)\sigma_3}{2(1-sz_3)},
\end{align}
with $p_0=\frac{1}{2}(1+sz_3), p_1=\frac{1}{2}(1-sz_3)$.
The nonzero eigenvalues of $\rho_0$ and $\rho_1$ are given by
\begin{align}\label{e:rho}
\lambda^{\pm}_{\rho_0}&=\frac{1+sz_3\pm\sqrt{r^2+2rz_3c_3+\sum_{i=1}^3(c_iz_i)^2}}{2(1+sz_3)},\\
\lambda^{\pm}_{\rho_1}&=\frac{1-sz_3\pm\sqrt{r^2-2rz_3c_3+\sum_{i=1}^3(c_iz_i)^2}}{2(1-sz_3)}.
\end{align}
From these one sees $\sqrt{r^2\pm 2rz_3c_3+\sum_{i=1}^3(c_iz_i)^2}\leqslant 1\pm sz_3$.
In particular, $\sqrt{r^2+c^2}\leqslant 1$, where $c=\max\{|c_1|, |c_2|\}$.
Following \cite{OZ, HV} the measure of the classical correlations is
\begin{align*}
\mathcal C(\rho)&=sup\{\mathcal I(\rho|{B_k})\}=sup\{S(\rho^a)-S(\rho|{B_k})\}\\
&=sup\{S(\rho^a)-(p_0S(\rho_0)+p_1S(\rho_1))\}\\
&=1-\frac{1}{2}(1+r)\log_2(1+r)-\frac{1}{2}(1-r)\log_2(1-r)\\
&+sup\{p_0 (\lambda^+_{\rho^0}\log_2\lambda^+_{\rho^0}+\lambda^-_{\rho^0}\log_2\lambda^-_{\rho^0})
+p_1 (\lambda^+_{\rho^1}\log_2\lambda^+_{\rho^1}+\lambda^-_{\rho^1}\log_2\lambda^-_{\rho^1})\},
\end{align*}
where the supremum is taken over von Neumann measures. The function inside supremum is a
function of $z_1, z_2, z_3$ subject to $z_1^2+z_2^2+z_3^2=1$, so there are two independent
variables.

Now we consider the function $G$ of two variables $z_3, \theta$:
\begin{equation}\label{Eq2.12}
\begin{split}
G(\theta, z_3)&=\frac{1}{4}(1+sz_3+\sqrt{r^2+2rc_3z_3+\theta})\log_2\frac{1+sz_3+\sqrt{r^2+2rc_3z_3+\theta}}{1+sz_3}\\
&+\frac{1}{4}(1+sz_3-\sqrt{r^2+2rc_3z_3+\theta})\log_2\frac{1+sz_3-\sqrt{r^2+2rc_3z_3+\theta}}{1+sz_3}\\
&+\frac{1}{4}(1-sz_3+\sqrt{r^2-2rc_3z_3+\theta})\log_2\frac{1-sz_3+\sqrt{r^2-2rc_3z_3+\theta}}{1-sz_3}\\
&+\frac{1}{4}(1-sz_3-\sqrt{r^2-2rc_3z_3+\theta})\log_2\frac{1-sz_3-\sqrt{r^2-2rc_3z_3+\theta}}{1-sz_3},\\
\end{split}
\end{equation}
where $\theta=\sum_{j=1}^3(c_jz_j)^2=(c_1^2-c_3^2)z_1^2+(c_2^2-c_3^2)z_2^2+c_3^2$.
Hence the quantum discord is given by
\begin{equation}\label{qdiscord}
\begin{split}
&\mathcal Q(\rho)=\mathcal I(\rho)-\mathcal C(\rho)\\
=&2-\frac{1}{2}(1+s)\log_2(1+s)-\frac{1}{2}(1-s)\log_2(1-s)+\sum_{i=1}^{4}\lambda_i\log_2\lambda_i
-\max_{z_3, \theta}G(z_3, \theta).
\end{split}
\end{equation}
Here the boundary of $(\theta, z_3)$ is determined by $z_1^2+z_2^2+z_3^2=1$.

To solve the maximum of $G(z_3, \theta)$, we first note that $G$ is an even function of $z_3$, so it is
enough to consider $z_3\in [0, 1]$. Furthermore, we can reduce the optimization to that of
a one variable function. We remark that
previously available methods have only been able to reduce the problem to that of a two-variable function.
For later purpose, we will simply write $z$ for $z_3$ from now on.

\begin{theorem}\label{t:1} Let $c=\max\{|c_1|, |c_2|\}$. Then the quantum discord of the
general $X$-state $\rho$ is given by
\begin{equation}\label{e:max}
\begin{aligned}
\mathcal Q(\rho)
=&2-\frac{1}{2}(1+s)\log_2(1+s)-\frac{1}{2}(1-s)\log_2(1-s)\\
&+\sum_{i=1}^{4}\lambda_i\log_2\lambda_i
-\max_{z\in[0,1]}F(z),
\end{aligned}
\end{equation}
where
\begin{equation}\label{e:F}
\begin{aligned}
&F(z)\\
&=\frac{1}{4}(1+sz+H_+)\log_2\frac{1+sz+H_+}{1+sz}+\frac{1}{4}(1+sz-H_+)\log_2\frac{1+sz-H_+}{1+sz}\\
&+\frac{1}{4}(1-sz+H_-)\log_2\frac{1-sz+H_-}{1-sz}+\frac{1}{4}(1-sz-H_-)\log_2\frac{1-sz-H_-}{1-sz}
\end{aligned}
\end{equation}
and $H_{\pm}=\sqrt{c^2(1-z^2)+(r\pm c_3z)^2}=\sqrt{(r^2+c^2)\pm 2rc_3z+(c_3^2-c^2)z^2}$.
\end{theorem}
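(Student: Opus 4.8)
The plan is to exploit that $G(\theta,z)$ depends on $\theta$ only through its two square-root terms, each of which is increasing in $\theta$, so that for every fixed $z=z_3$ the maximization over $\theta$ becomes trivial and is attained at the largest admissible $\theta$. First I would isolate the elementary building block
\begin{equation*}
h_a(x)=\tfrac14(a+x)\log_2\tfrac{a+x}{a}+\tfrac14(a-x)\log_2\tfrac{a-x}{a},\qquad 0\le x\le a,
\end{equation*}
so that $G(\theta,z)=h_{1+sz}\bigl(\sqrt{r^2+2rc_3z+\theta}\bigr)+h_{1-sz}\bigl(\sqrt{r^2-2rc_3z+\theta}\bigr)$, matching \eqref{Eq2.12}. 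A direct differentiation gives $h_a'(x)=\tfrac14\log_2\tfrac{a+x}{a-x}$, which is strictly positive for $0<x<a$. Admissibility of the measurement forces $x\le a$ — this is exactly the bound $\sqrt{r^2\pm 2rc_3z+\theta}\le 1\pm sz$ noted after \eqref{e:rho} — and each inner radical is increasing in $\theta$, so the chain rule yields $\partial G/\partial\theta>0$ on the interior. Continuity of $h_a$ up to $x=a$ covers the boundary case where an eigenvalue vanishes, the singular logarithm being killed by its vanishing prefactor.

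Second, I would describe the feasible range of $\theta$ once $z_3=z$ is fixed. Writing $\theta=(c_1^2-c_3^2)z_1^2+(c_2^2-c_3^2)z_2^2+c_3^2$ subject to $z_1^2+z_2^2=1-z^2$ and $z_1^2,z_2^2\ge 0$, one sees that $\theta$ is an affine function of $(z_1^2,z_2^2)$ on a line segment, hence is maximized at an endpoint. Placing all the weight on the coordinate carrying the larger coefficient, namely $z_j^2=1-z^2$ for the index with $c_j^2=c^2=\max\{c_1^2,c_2^2\}$ and the other coordinate zero, gives
\begin{equation*}
\theta_{\max}(z)=(c^2-c_3^2)(1-z^2)+c_3^2=c^2(1-z^2)+c_3^2z^2.
\end{equation*}
By the monotonicity just established, $\max_{\theta}G(\theta,z)=G(\theta_{\max}(z),z)$ for every fixed $z$.

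Finally I would substitute $\theta=\theta_{\max}(z)$ back into $G$ and simplify the radicands,
\begin{equation*}
r^2\pm 2rc_3z+\theta_{\max}(z)=(r\pm c_3z)^2+c^2(1-z^2)=H_\pm^2,
\end{equation*}
so that $G(\theta_{\max}(z),z)=F(z)$ with $F$ exactly as in \eqref{e:F}. Since $G$ is even in $z_3$ (already remarked before the theorem) and $\theta_{\max}$ is even in $z$, the reduced function $F$ is even, so it suffices to optimize over $z\in[0,1]$; combining with \eqref{qdiscord} then yields \eqref{e:max}. I expect the only delicate point to be the monotonicity step: one must confirm the sign of $h_a'$ together with the admissibility bound $x\le a$, and verify that pushing $\theta$ to $\theta_{\max}(z)$ corresponds to a genuine measurement direction $(z_1,z_2,z_3)$ on the unit sphere, so that no constraint is violated at the optimum. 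The remaining reduction of the radicands to $H_\pm$ is routine algebra.
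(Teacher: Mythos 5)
Your proposal is correct and follows essentially the same route as the paper: establish that $G$ is strictly increasing in $\theta$ (the paper computes $\partial G/\partial\theta$ directly, you factor through the building block $h_a$ and the chain rule, arriving at the identical expression), then maximize $\theta$ at fixed $z$ over the constraint $z_1^2+z_2^2=1-z^2$ by putting all weight on the coordinate with $c_j^2=c^2$, and substitute $\theta_{\max}(z)$ to obtain $H_\pm$ and hence $F(z)$. The only cosmetic differences are your explicit treatment of the boundary case $x=a$ and the phrasing of the $\theta$-maximization as an affine optimization over a segment, neither of which changes the argument.
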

\begin{proof} First we notice that $G(z,\theta)$ is a strictly increasing function of $\theta$:
 \begin{align*}
\frac{\partial G}{\partial\theta}&=\frac{1}{8\sqrt{r^2+2rc_3z+\theta}}\log_2(\frac{1+sz+\sqrt{r^2+2rc_3z+\theta}}{1+sz-\sqrt{r^2+2rc_3z+\theta}})\\
&+\frac{1}{8\sqrt{r^2-2rc_3z+\theta}}\log_2(\frac{1-sz+\sqrt{r^2-2rc_3z+\theta}}{1-sz-\sqrt{r^2-2rc_3z+\theta}})>0.
\end{align*}
Therefore there are no interior critical points and extremal points must lie on the boundary
of the domain. Since $\frac{\partial G}{\partial\theta}>0$, we further conclude that
$\max G$ takes place at the largest value of $\theta$ for some $z\in [0, 1]$.
As $z_1^2+z_2^2+z^2=1$, we have that
\begin{align*}
\theta&=(c_1^2-c_3^2)z_1^2+(c_2^2-c_3^2)z_2^2+c_3^2\\
&\leqslant (c^2-c_3^2)(z_1^2+z_2^2)+c_3^2\\
&=c^2+(c_3^2-c^2)z^2,
\end{align*}
For each fixed $z$, the maximum value $c^2+(c_3^2-c^2)z^2$ can be achieved by appropriate $z_1, z_2$.
In fact, for $c=|c_1|\geqslant |c_2|$, take $z_2=0$, then $\theta=c^2z_1^2+c_3^2z^2=c^2+(c_3^2-c^2)z^2$.
Similarly, take $z_1=0$ if $|c_1|<|c_2|$.
Therefore $\max G(z, \theta)=\max_{z\in[0,1]}G(z, c^2+(c_3^2-c^2)z^2)$, which is
explicitly given in Eqs. (\ref{e:max})-(\ref{e:F}).
\end{proof}

The optimization problem is generally nontrivial, as it has five parameters. We have the following analytic
formulas for several regions of the five parameters.
\begin{theorem}\label{t:2} For the general X-type quantum state, the quantum discord is explicitly computed according to the following cases.

(a) If  $s\geqslant0$, $rc_3\leqslant0$ and $c_3^2-c^2\geqslant src_3$ or $s=0, c_3^2\geqslant c^2$,
then the quantum discord is given by Eq. (\ref{e:max}) with
\begin{align}\label{case-a}
\begin{aligned}
\max_{z\in[0, 1]} F(z)=F(1)&=\frac{1}{4}(1+s+r+c_3)\log_2(\frac{1+s+r+c_3}{1+s})\\
&+\frac{1}{4}(1+s-r-c_3)\log_2(\frac{1+s-r-c_3}{1+s})\\
&+\frac{1}{4}(1-s+r-c_3)\log_2(\frac{1-s+r-c_3}{1-s})\\
&+\frac{1}{4}(1-s-r+c_3)\log_2(\frac{1-s-r+c_3}{1-s}).
\end{aligned}
\end{align}

(b) If $s\leqslant0$, $rc_3\geqslant0$ and $c_3^2-c^2\geqslant src_3$, then the quantum discord is given by the same formula as in (a).

(c) If $r=0$ and $c_3^2\geqslant c^2$ ($s$ is not necessarily $0$) or $r=s=0$. Let $C=\max\{|c_i|\}$, then the quantum discord is given by Eq. (\ref{e:max}) with
\begin{align}\label{case-c}
\begin{aligned}
\max_{z\in[0, 1]} F(z)&=\frac{1}{4}(1+s+C)\log_2(\frac{1+s+C}{1+s})+\frac{1}{4}(1+s-C)\log_2(\frac{1+s-C}{1+s})\\
&+\frac{1}{4}(1-s+C)\log_2(\frac{1-s+C}{1-s})+\frac{1}{4}(1-s-C)\log_2(\frac{1-s-C}{1-s}).
\end{aligned}
\end{align}
(d) If $s=rc_3\leqslant0$, $c^2=c_3^2$, and $c^2+r^2\leqslant\frac23$, then the quantum discord is given by Eq. (\ref{e:max}) with
\begin{align}\label{case-d}
\max_{z\in[0, 1]} F(z)&=\frac{1}{2}(1+\sqrt{r^2+c^2})\log_2(1+\sqrt{r^2+c^2})\notag\\
&+\frac{1}{2}(1-\sqrt{r^2+c^2})\log_2(1-\sqrt{r^2+c^2}).
\end{align}
\end{theorem}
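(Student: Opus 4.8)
The strategy is to feed the one-variable reduction of Theorem \ref{t:1} into a sign analysis of $F'(z)$, so as to locate the maximum at an endpoint of $[0,1]$ in cases (a)--(c) and at $z=0$ in case (d). It is convenient to write $F(z)=\varphi(1+sz,H_+)+\varphi(1-sz,H_-)$ with
\[
\varphi(a,b)=\tfrac14\bigl[(a+b)\log_2\tfrac{a+b}{a}+(a-b)\log_2\tfrac{a-b}{a}\bigr],
\]
whose partials are $\partial_a\varphi=\tfrac14\log_2(1-b^2/a^2)\le0$ and $\partial_b\varphi=\tfrac14\log_2\tfrac{a+b}{a-b}\ge0$ (both legitimate since $0\le H_\pm<1\pm sz$ on $R(\rho)$). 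Using $H_\pm'=(\pm rc_3+(c_3^2-c^2)z)/H_\pm$ and regrouping, I would record $F'$ in the form
\[
F'(z)=s\,(A_+-A_-)+rc_3\Bigl(\tfrac{B_+}{H_+}-\tfrac{B_-}{H_-}\Bigr)+(c_3^2-c^2)\,z\Bigl(\tfrac{B_+}{H_+}+\tfrac{B_-}{H_-}\Bigr),
\]
where $A_\pm=\partial_a\varphi(1\pm sz,H_\pm)$ and $B_\pm=\partial_b\varphi(1\pm sz,H_\pm)$. The three pieces are exactly the quantities whose signs the hypotheses of Theorem \ref{t:2} are designed to control.

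Next I would dispose of the symmetry and the easy case. Since $H_\pm$ are interchanged under $r\mapsto-r$ and $a_\pm=1\pm sz$ are interchanged under $s\mapsto-s$, one checks that $F(z;s,r)=F(z;-s,-r)$, so the region in (b) is the image of (a) under $(s,r)\mapsto(-s,-r)$ and (b) follows from (a) for free. In (c) the assumption $r=0$ forces $H_+=H_-=:H$ with $H'=(c_3^2-c^2)z/H$, collapsing $F'$ to $s(A_+-A_-)+(B_++B_-)H'$; here $a_+\ge a_-$ gives $A_+-A_-\ge0$ while $c_3^2\ge c^2$ gives $H'\ge0$, so $F'\ge0$ and $\max F=F(1)$, which is \eqref{case-c} after substituting $H(1)=|c_3|=C$. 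The degenerate branch $r=s=0$ is handled directly: then $F'=(B_++B_-)H'$ has a single sign, and the selected endpoint delivers the same value $2\varphi(1,C)$ in either case.

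For (a) itself the plan is to prove $F'(z)\ge0$ on $[0,1]$ from the displayed formula. Under $s\ge0$ and $rc_3\le0$ one has $H_+\le H_-$ and $a_+\ge a_-$, making the first term nonnegative; monotonicity of $b\mapsto\tfrac1b\log_2\tfrac{a+b}{a-b}$ and of $a\mapsto\log_2\tfrac{a+b}{a-b}$ gives $B_+/H_+\le B_-/H_-$, so that (again using $rc_3\le0$) the second term is nonnegative as well. If $c_3^2\ge c^2$ the third term is plainly nonnegative and we are done; the remaining possibility $c_3^2<c^2$ is precisely where the hypothesis $c_3^2-c^2\ge src_3$ is used, to show that the first two contributions dominate the negative third one. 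Establishing this domination is the main obstacle, since it couples the entropy-difference term $s(A_+-A_-)$ with the $\log$-ratio terms and demands a quantitative inequality rather than a mere sign count.

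Finally, case (d) is the boundary between ``endpoint'' and ``interior'' behaviour. With $c^2=c_3^2$ the quadratic under the root loses its $z^2$-term, and $s=rc_3\le0$ lets one factor $F'(z)=s\bigl[(A_+-A_-)+B_+/H_+-B_-/H_-\bigr]$, the bracket vanishing at $z=0$; thus $z=0$ is automatically a critical point and the issue is whether it is the global maximum. I expect to show that the bracket stays nonnegative on all of $[0,1]$ exactly when $c^2+r^2\le\tfrac23$, so that $F'\le0$ and $\max F=F(0)$ gives \eqref{case-d}; the threshold $\tfrac23$ emerges from a second-order expansion of the bracket at $z=0$, and beyond it a genuine interior maximizer appears and must be located by the iteration of Theorem \ref{Th3}. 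Upgrading this infinitesimal condition at $z=0$ to a statement valid uniformly on the whole interval is the delicate step here.
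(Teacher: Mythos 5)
Your displayed formula for $F'(z)$ is exactly the paper's Eq.~(\ref{e:difb}) in different notation, and parts of the proposal are sound and even improve on the paper: the symmetry $F(z;s,r)=F(z;-s,-r)$ (via $H_\pm(-r)=H_\mp(r)$) does reduce case (b) to case (a), whereas the paper simply re-runs its inequality chains; your case (c) and the subcase $c_3^2\geqslant c^2$ of case (a) also match the paper's argument. (Minor repair: in (c) your step ``$a_+\geqslant a_-$ gives $A_+-A_-\geqslant0$'' presumes $s\geqslant0$; for $s<0$ both factors of $s(A_+-A_-)$ reverse sign, so the product is still nonnegative.)

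The genuine problem is that the two hardest parts of Theorem \ref{t:2} are left as announced obstacles rather than proved. First, the subcase $src_3\leqslant c_3^2-c^2<0$ of (a): you claim finishing it ``demands a quantitative inequality rather than a mere sign count'' coupling the entropy term $s(A_+-A_-)$ with the log-ratio terms. That diagnosis is wrong, and since your (b) is deduced from (a), everything outside $c_3^2\geqslant c^2$ is missing from your proof. The paper's argument remains a sign count: drop the nonnegative first term and write the other two as $\frac14\bigl(\frac{\alpha}{1+sz}\,g(A_+)+\frac{\beta}{1-sz}\,g(A_-)\bigr)$, where $\alpha=rc_3+(c_3^2-c^2)z\leqslant0$, $\beta=-rc_3+(c_3^2-c^2)z$, $A_\pm=H_\pm/(1\pm sz)$ (the paper's $A_\pm$, not your $\partial_a\varphi$), and $g(x)=\frac1x\log_2\frac{1+x}{1-x}$ is increasing. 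Since $A_+\leqslant A_-$ and $\alpha\leqslant0$, replacing $g(A_+)$ by $g(A_-)$ only lowers the sum, giving the bound $\frac14\bigl(\frac{\alpha}{1+sz}+\frac{\beta}{1-sz}\bigr)g(A_-)=\frac{z\,(c_3^2-c^2-src_3)}{2(1-s^2z^2)}\,g(A_-)\geqslant0$; the hypothesis $c_3^2-c^2\geqslant src_3$ enters only through this combined coefficient. Note this needs $g(A_+)\leqslant g(A_-)$, i.e.\ $(1+sz)B_+/H_+\leqslant(1-sz)B_-/H_-$, which is \emph{not} implied by the inequality $B_+/H_+\leqslant B_-/H_-$ you derived, so your own monotonicity step must be strengthened accordingly. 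Second, case (d) is pure expectation in your text (``I expect to show\dots''), so nothing is proved; moreover the plan is off target: $c^2+r^2\leqslant\frac23$ is a sufficient hypothesis, not the exact threshold your ``exactly when'' asserts, and in the paper it is used as a global estimate ($r^2+c^2\pm rc_3\leqslant\frac32(r^2+c^2)\leqslant1$) inside a calculus argument for the auxiliary function $k(t)=\frac1{H(t)}\log_2\frac{1+tz+H(t)}{1+tz-H(t)}$, $H(t)=\sqrt{r^2+c^2+2tz}$, which bounds $F'(z)\leqslant\frac{rc_3}{4}(k(s)-k(-s))$ uniformly for all $z\in[0,1]$ --- not via a second-order expansion at $z=0$ that must then be ``upgraded'' to the whole interval.
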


\begin{proof} We compute the derivative of $F(z)$.
\begin{equation}\label{e:difb} 
\begin{split}
F^{\prime}(z)&=\frac{1}{4}\left(s\log_2\frac{(1+sz+H_+)(1+sz-H_+)(1-sz)^2}{(1-sz+H_-)(1-sz-H_-)(1+sz)^2}\right.\\
&\quad+\left.H_+^{\prime}\log_2\frac{1+sz+H_+}{1+sz-H_+}+H_-^{\prime}\log_2\frac{1-sz+H_-}{1-sz-H_-}\right)\\
&=\frac{1}{4}\left(s\log_2\frac{1-A_+^2}{1-A_-^2}
+\frac{rc_3+(c_3^2-c^2)z}{1+sz}\frac1{A_+}\log_2\frac{1+A_+}{1-A_+}\right.\\
&\qquad\qquad+\left.\frac{-rc_3+(c_3^2-c^2)z}{1-sz}\frac1{A_-}\log_2\frac{1+A_-}{1-A_-}\right)
\end{split}
\end{equation}
where $H_{\pm}^{\prime}(z)=H_\pm^{-1}(\pm rc_3+(c_3^2-c^2)z)$ and $\displaystyle A_{\pm}=\frac{H_\pm}{1\pm sz}\in [0, 1]$.

{\it Case (a)}. Since
\begin{equation}\label{e:AA}
\begin{aligned}
&A_+^2-A_-^2=\frac{(1-sz)^2H_+^2-(1+sz)^2H_-^2}{(1-sz)^2(1+sz)^2}\\
&=\frac{(1-sz)^2((r+c_3z)^2+c^2(1-z^2))-(1+sz)^2((r-c_3z)^2+c^2(1-z^2))}{(1-sz)^2(1+sz)^2}\\
&=\frac{(1+s^2z^2)4rc_3z-4sz[(r^2+c_3^2z^2)+c^2(1-z^2)]}{(1-sz)^2(1+sz)^2}
\end{aligned}
\end{equation}

Then the first term of $F'(z)\geqslant 0$ iff $s(A_+^2-A_-^2)\leqslant 0$, which holds if
$s\geqslant 0$ and $rc_3\leqslant 0$ or $s\leqslant 0$ and $rc_3\geqslant 0$. In particular, $r=0$
implies that $s(A_+^2-A_-^2)\leqslant 0$.

Note that $\displaystyle g(x)=\frac1x\ln\frac{1+x}{1-x}$ is a strictly increasing function on $(0, 1)$, as
\begin{align*}
g'(x)&=-\frac1{x^2}\log_2\frac{1+x}{1-x}+\frac2{x\ln 2}\frac1{1-x^2}\\
&=\frac{2}{x\ln 2}\left(\sum_{n=0}^{\infty}\frac{-x^{2n}}{2n+1}+\sum_{n=0}^{\infty}x^{2n}\right)>0.
\end{align*}
Therefore
$A_+\geqslant A_-$ iff
\begin{equation}\label{e:logA}
\frac1{A_+}\ln\frac{1+A_+}{1-A_+}\geqslant \frac1{A_-}\ln\frac{1+A_-}{1-A_-}.
\end{equation}

(i) If $s\geqslant 0,rc_3\leqslant 0$ and $c_3^2-c^2\geqslant 0$, It follows from Eq.(\ref{e:AA})
that $A_+\leqslant A_-$, then Eq. (\ref{e:logA}) implies that
\begin{align*}
F'(z)&\geqslant \frac14\left(\frac{rc_3+(c_3^2-c^2)z}{1+sz}\frac1{A_+}\log_2\frac{1+A_+}{1-A_+}+
\frac{-rc_3+(c_3^2-c^2)z}{1-sz}\frac1{A_-}\log_2\frac{1+A_-}{1-A_-}\right)\\
&\geqslant \frac14\left(\frac{rc_3+(c_3^2-c^2)z}{1+sz}\frac1{A_+}\log_2\frac{1+A_+}{1-A_+}+
\frac{-rc_3+(c_3^2-c^2)z}{1+sz}\frac1{A_-}\log_2\frac{1+A_-}{1-A_-}\right)\\
&\geqslant \frac14\left(\frac{rc_3+(c_3^2-c^2)z}{1+sz}+\frac{-rc_3+(c_3^2-c^2)z}{1+sz}\right)\frac1{A_+}\log_2\frac{1+A_+}{1-A_+}\\
&=\frac12\frac{(c_3^2-c^2)z}{1+sz}\frac1{A_+}\log_2\frac{1+A_+}{1-A_+}\geqslant 0.
\end{align*}

(ii) If $s\geqslant 0,rc_3\leqslant 0$ and $src_3\leqslant c_3^2-c^2\leqslant 0$ , we have that
\begin{align*}
F'(z)&\geqslant \frac14\left(\frac{rc_3+(c_3^2-c^2)z}{1+sz}\frac1{A_+}\log_2\frac{1+A_+}{1-A_+}+
\frac{-rc_3+(c_3^2-c^2)z}{1-sz}\frac1{A_-}\log_2\frac{1+A_-}{1-A_-}\right)\\
&\geqslant \frac14\left(\frac{rc_3+(c_3^2-c^2)z}{1+sz}+\frac{-rc_3+(c_3^2-c^2)z}{1-sz}\right)\frac1{A_-}\log_2\frac{1+A_-}{1-A_-}\\
&=\frac12\frac{(-rc_3s+c_3^2-c^2)z}{(1-sz)(1+sz)}\frac{1}{A_-}\log_2\frac{1+A_-}{1-A_-}\geqslant 0.
\end{align*}

(iii) $s=0$ and $c_3^2-c^2\geqslant 0$, so $A_{\pm}=H_{\pm}$. Note that $A_+-A_-$ has the same sign as $rc_3$ due to Eq. (\ref{e:AA}). Then
\begin{equation}\label{e:diff2}
\begin{aligned}
F^{\prime}(z)&=\frac14\left(\frac{rc_3+(c_3^2-c^2)z}{A_+}
\log_2\frac{1+A_+}{1-A_+}+\frac{-rc_3+(c_3^2-c^2)z}{A_-}\log_2\frac{1+A_-}{1-H_-}\right)\\
&=\frac{rc_3}{4}\left(\frac1{A_+}\log_2\frac{1+A_+}{1-A_+}-\frac1{A_-}\log_2\frac{1+A_-}{1-A_-}\right)\\
&\quad+\frac{(c_3^2-c^2)z}{4}\left(\frac1{A_+}\log_2\frac{1+A_+}{1-A_+}
+\frac1{A_-}\log_2\frac{1+A_-}{1-H_-}\right)\geqslant 0,
\end{aligned}
\end{equation}
We see that $F^{\prime}(z)$ is always increasing, so the maximum of $F(z)$ is $F(1)$, which is simplified
to the formula shown in Eq. (\ref{case-a}).

{\it Case (b)} is treated in two subcases.

(i) Suppose that $s\leqslant 0, rc_3\geqslant 0$, and $c_3^2-c^2\geqslant 0$. then $A_+\geqslant A_-$. Therefore
\begin{align*}
F'(z)&\geqslant \frac14\left(\frac{rc_3+(c_3^2-c^2)z}{1+sz}\frac1{A_+}\log_2\frac{1+A_+}{1-A_+}+
\frac{-rc_3+(c_3^2-c^2)z}{1-sz}\frac1{A_-}\log_2\frac{1+A_-}{1-A_-}\right)\\
&\geqslant \frac14\left(\frac{rc_3+(c_3^2-c^2)z}{1-sz}+\frac{-rc_3+(c_3^2-c^2)z}{1-sz}\right)\frac1{A_-}\log_2\frac{1+A_-}{1-A_-}\\
&=\frac12\frac{z(c_3^2-c^2)}{1-sz}\frac1{A_-}\log_2\frac{1+A_-}{1-A_-}\geqslant 0.
\end{align*}

(ii) Suppose that $s\leqslant 0,rc_3\geqslant 0$ and $src_3\leqslant c_3^2-c^2\leqslant 0$, then
\begin{align*}
F'(z)&\geqslant \frac14\left(\frac{rc_3+(c_3^2-c^2)z}{1+sz}\frac1{A_+}\log_2\frac{1+A_+}{1-A_+}+
\frac{-rc_3+(c_3^2-c^2)z}{1-sz}\frac1{A_-}\log_2\frac{1+A_-}{1-A_-}\right)\\
&\geqslant \frac14\left(\frac{rc_3+(c_3^2-c^2)z}{1+sz}+\frac{-rc_3+(c_3^2-c^2)z}{1-sz}\right)\frac1{A_+}\log_2\frac{1+A_+}{1-A_+}\\
&=\frac12\frac{(-rc_3s+c_3^2-c^2)z}{(1-sz)(1+sz)}\frac1{A_+}\log_2\frac{1+A_+}{1-A_+}
\geqslant 0.
\end{align*}
Therefore the maximum of $F(z)$ is also $F(1)$.

{\it Case (c)} is treated in two subcases: (i) $r=0$ and $c_3^2\geqslant c^2$; (ii) $r=s=0$.

First of all, the assumption of $r=0$ implies
that $H_+=H_-$ and the first term of $F'(z)$ is nonnegative. Therefore
\begin{align*}
F'(z)\geqslant H_+^{-1}(c_3^2-c^2)z\left(\log_2\frac{1+sz+H_+}{1+sz-H_+}+\log_2\frac{1-sz+H_-}{1-sz-H_-}\right)
\end{align*}
When $c_3^2\geqslant c^2$, $F'(z)\geqslant 0$ and the maximum is $F(1)$, which is given as in Eq.(\ref{case-c}).
If $s=r=0$, then
\begin{align*}
F'(z)= 2H_+^{-1}(c_3^2-c^2)z\log_2\frac{1+H_+}{1-H_+}.
\end{align*}
Therefore $\max F(z)$ is $F(1)$ or $F(0)$ according to $c_3^2\geqslant c^2$ or not. In both cases,
$\max F(z)$ is given by the same formula in Eq.(\ref{case-c}).

{\it Case (d)}. If $s=rc_3\leqslant0$, $c^2=c_3^2$, and $c^2+r^2\leqslant\frac23$, It follows from Eq.(\ref{e:AA}) that the first term $F'(z)\leqslant 0$.

Let $k(s)=\frac1{H(s)}\log_2\frac{1+sz+H(s)}{1+sz-H(s)}$, where $H(s)=\sqrt{r^2+c^2+2sz}$. Then
\begin{align}
F'(z)\leqslant&\frac{rc_3}{4H_+(z)}\log_2\frac{1+sz+H_+}{1+sz-H_+}+\frac{-rc_3}{4H_-(z)}\log_2\frac{1-sz+H_-}{1-sz-H_-}\notag\\
=&\frac{rc_3}{4}(k(s)-k(-s)).
\end{align}
As a function of $s$ we have that $H'(s)=\frac{z}{H(s)}$ and
\begin{align*}
k'(s)&=\frac{z}{H(s)}\log_2\frac{1+sz+H(s)}{1+sz-H(s)}+\frac1{H(s)\ln2}\left(\frac{z+H'(s)}{1+sz+H(s)}-\frac{z-H'(s)}{1+sz-H(s)}\right)\notag\\
&=\frac{z}{H(s)}\log_2\frac{1+sz+H(s)}{1+sz-H(s)}+\frac{2z}{H^2(s)\ln2}\frac{1+sz-H^2(s)}{(1+sz)^2-H^2(s)}\notag\\
&=\frac{z}{H(s)}\log_2\frac{1+sz+H(s)}{1+sz-H(s)}+\frac{2z}{H^2(s)\ln2}\frac{1-sz-(r^2+c^2)}{(1+sz)^2-H^2(s)}\geqslant0
\end{align*}
the inequality holds because
\begin{align*}
r^2+c^2\pm s=r^2+c_3^2\pm rc_3\leqslant r^2+c_3^2+\frac{r^2+c_3^2}{2}\leqslant1.
\end{align*}
Similarly $k'(-s)\geqslant0$, thus $\frac{rc_3}{4}(k'(s)+k'(-s))\mid_s\leqslant0$, which implies that $F'(z)\leqslant0$.

Therefore the maximum of $F(z)$ on $z\in[0,1]$ is $F(0)$, which is given by the formula in Eq.(\ref{case-d}).

\end{proof}

Remark. The theorem shows that in most cases $\max\{F(1), F(0)\}$ is the maximum. Moreover, computer-generated random sets of the 5 parameters show that the maximum is mostly given by $F(1)$. However, there are still some cases with the
optimal point $\hat{z}\neq 0, 1$ (see Example 2 below).

\section{Exceptional Solutions}
As we remarked that there are situations the optimization is attained at $\hat{z}\in (0, 1)$. To nail down this case
we give the following result.

\begin{theorem}\label{Th3} The optimization of $F(z)$ in the quantum discord has the following property.

(a) $F'(0)=F^{(3)}(0)=0$, so $F(0)$ is a critical point, and both $F(0)$ and $F(1)$ are positive;

(b)To determine if there are other optimal points, we provide the following method. We have that
\begin{align}\label{e:difb1}
F'(z)=\frac{1}{4\ln2}&\left\{s\ln\frac{((1+sz)^2-H_+^2)(1-sz)^2}{((1-sz)^2-H_-^2)(1+sz)^2}\right.\notag\\
+H_+^{\prime}&\left.\ln\frac{1+sz+H_+}{1+sz-H_+}+H_-^{\prime}\ln\frac{1-sz+H_-}{1-sz-H_-}\right\}\\
F''(z)=\frac1{2\ln 2}&\left\{\frac{(s^2+H_+'^2)(1+sz)-2sH_+H_+'}{(1+sz)^2-H_+^2}
+\frac{(s^2+H_-'^2)(1-sz)+2sH_-H_-'}{(1-sz)^2-H_-^2}\right.\notag\\
&-\left.\frac{2s^2}{1-s^2z^2}+\frac12 H_+''\ln\frac{1+sz+H_+}{1+sz-H_+}+\frac12 H_-''\ln\frac{1-sz+H_-}{1-sz-H_-}\right\},
\end{align}
where $H_{\pm}=\sqrt{(r\pm c_3z)^2+c^2(1-z^2)}$, $H'_{\pm}=H_{\pm}^{-1}(\pm rc_3+(c_3^2-c^2)z)$ and $H''_{\pm}=H_{\pm}^{-3}c^2(c_3^2-c^2-r^2)$.
If the following Newton's iterative formula gives points inside $(0, 1)$, then its limit will be another optimal
point.
\begin{equation}\label{newton}
z_{k+1}=z_k-\frac{F'(z_k)}{F''(z_k)}.
\end{equation}
In practice, one usually starts with $z_0=1$.
\end{theorem}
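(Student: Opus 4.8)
The plan is to dispatch part (a) and the two closed-form derivative formulas of part (b) by essentially \emph{computational} means, and then to give a short limiting argument for the Newton iteration.

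For part (a), the guiding observation is that $F(z)$ is an \emph{even} function of $z$. Indeed, replacing $z$ by $-z$ sends $1+sz\mapsto 1-sz$ and interchanges $H_+\leftrightarrow H_-$, since $H_\pm=\sqrt{(r\pm c_3z)^2+c^2(1-z^2)}$ satisfy $H_+(-z)=H_-(z)$; this merely permutes the first pair of summands in \eqref{e:F} with the second pair, so $F(-z)=F(z)$. Because $H_\pm(0)=\sqrt{r^2+c^2}>0$ in the nondegenerate case, $F$ is smooth near $z=0$, its Taylor expansion contains only even powers, and hence all odd derivatives vanish at the origin; in particular $F'(0)=F^{(3)}(0)=0$, exhibiting $z=0$ as a critical point. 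For positivity I would isolate the entropy building block $\psi_p(h)=(p+h)\log_2\frac{p+h}{p}+(p-h)\log_2\frac{p-h}{p}$ on $0\le h\le p$, for which $\psi_p(0)=0$ and $\psi_p'(h)=\log_2\frac{p+h}{p-h}\ge 0$, so $\psi_p(h)>0$ for $h>0$. Writing $F(0)$ as a block with $p=1,\ h=\sqrt{r^2+c^2}$, and $F(1)$ as a sum of blocks with $p=1\pm s,\ h=|r\pm c_3|$, then shows both are positive, strictly so away from the degenerate loci $r=c=0$ (for $F(0)$) and $r=c_3=0$ (for $F(1)$).

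For the formulas in (b), the expression \eqref{e:difb1} for $F'(z)$ is just \eqref{e:difb} rewritten with $\ln$ in place of $\log_2$, using $(1\pm sz)^2-H_\pm^2=(1\pm sz+H_\pm)(1\pm sz-H_\pm)$. To obtain $F''(z)$ I would differentiate \eqref{e:difb1} term by term. The rational part comes from $\frac{d}{dz}\ln\big((1\pm sz)^2-H_\pm^2\big)$ together with $\frac{d}{dz}\ln\frac{1\pm sz+H_\pm}{1\pm sz-H_\pm}=\frac{2[H_\pm'(1\pm sz)\mp sH_\pm]}{(1\pm sz)^2-H_\pm^2}$, while the logarithmic part arises from differentiating the prefactors $H_\pm'$, which produces $\tfrac12 H_\pm''\ln\frac{1\pm sz+H_\pm}{1\pm sz-H_\pm}$. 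Collecting terms over the common denominators $(1\pm sz)^2-H_\pm^2$ and $1-s^2z^2$ reproduces the stated formula, once one records $H_\pm'=H_\pm^{-1}(\pm rc_3+(c_3^2-c^2)z)$ and, by a second differentiation, $H_\pm''=H_\pm^{-3}c^2(c_3^2-c^2-r^2)$; here the $z$- and $z^2$-terms in the numerator of $H_\pm''$ cancel, leaving a constant multiple of $H_\pm^{-3}$.

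Finally, for the Newton claim I would argue that the map $N(z)=z-F'(z)/F''(z)$ has as its fixed points precisely the zeros of $F'$ at which $F''$ is finite and nonzero. If the iterates $\{z_k\}$ of \eqref{newton} stay in $(0,1)$ and converge to some $\hat z\in(0,1)$, then $z_{k+1}-z_k\to 0$, so $F'(z_k)/F''(z_k)\to 0$; using the continuity of $F'$ and $F''$ on $(0,1)$ and $F''(\hat z)\neq 0$, this forces $F'(\hat z)=0$, so $\hat z$ is an interior critical point. Since the maximum of the continuous $F$ on the compact interval $[0,1]$ is attained either at an endpoint or at an interior critical point, comparing $F(\hat z)$ with $F(0)$ and $F(1)$ pins down the global maximum required in \eqref{e:max}. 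The \textbf{main obstacle} is not this limit argument but guaranteeing that the iteration actually remains inside $(0,1)$ and converges, which is precisely why the statement is conditional; in practice this is secured by the standard local quadratic convergence of Newton's method (valid since $F''(\hat z)\neq 0$) together with the good starting value $z_0=1$ suggested by the even, endpoint-dominated behavior established in part (a) and Theorem \ref{t:2}.
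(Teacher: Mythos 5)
Your proposal is correct, and its core coincides with the paper's own (very terse) proof: the paper simply notes that $F(z)=f(z)+f(-z)$ for an analytic $f$ — i.e.\ $F$ is even, which is exactly your observation that $z\mapsto -z$ swaps $H_+\leftrightarrow H_-$ and permutes the summands of \eqref{e:F} — so all odd-order derivatives vanish at $0$, and then appeals to ``Newton's approximation formula.'' Where you go beyond the paper: (1) the positivity of $F(0)$ and $F(1)$ is asserted in part (a) but never proved in the paper; your entropy block $\psi_p(h)$ with $\psi_p(0)=0$ and $\psi_p'(h)=\log_2\frac{p+h}{p-h}\geq 0$ supplies that missing argument, and the domain constraints it needs ($\sqrt{r^2+c^2}\leq 1$ and $|r\pm c_3|\leq 1\pm s$) are indeed available from Section 2, with your caveat about the degenerate loci $r=c=0$, resp.\ $r=c_3=0$, being an honest refinement of the paper's blanket claim; (2) the paper states the formulas for $F'$ and $F''$ without derivation in this proof ($F'$ having appeared as \eqref{e:difb} in the proof of Theorem \ref{t:2}), whereas you actually differentiate — your bookkeeping checks out, including the cancellation of the $z$- and $z^2$-terms that leaves $H_\pm''=H_\pm^{-3}c^2(c_3^2-c^2-r^2)$ and the factor-of-two collection giving the prefactor $\frac{1}{2\ln 2}$; (3) the paper's one-line justification of \eqref{newton} is replaced by your limit argument, which is sound (in fact mere boundedness of $F''$ near $\hat z$ already forces $F'(\hat z)=0$), and you correctly flag what the paper glosses over: a limit of the iteration is only a critical point, so one must still compare $F(\hat z)$ against $F(0)$ and $F(1)$ to identify the maximum in \eqref{e:max}, and convergence itself is conditional. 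In short: same route as the paper, but your writeup proves several claims the paper merely asserts.
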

\begin{proof} Note that $F(z)=f(z)+f(-z)$ for some analytic function $f(z)$, therefore $F^{(n)}(0)=0$
for any odd integer $n\in\mathbb N$. The iterative formula follows from Newton's approximation formula.
\end{proof}
\medskip

Example 1. Let $\rho=\frac14(I+\sum_{i=1}^3 c_i\sigma_i\otimes \sigma_i)$ be the Bell-diagonal state. Then $r=s=0$. This is a special
case of Theorem \ref{t:2} (c), so the maximum of $F(z)$ on $z\in[0,1]$ is
$\frac{1}{2}(1+C)\log_2(1+C)+\frac{1}{2}(1-C)\log_2(1-C)$, and the quantum discord is
\begin{equation}\label{Eq2.28}
\begin{split}
\mathcal Q(\rho)&=\frac{1}{4}(1-c_3+c_1+c_2)\log_2(1-c_3+c_1+c_2)\\
&+\frac{1}{4}(1-c_3-c_1-c_2)\log_2(1-c_3-c_1-c_2)\\
&+\frac{1}{4}(1+c_3+c_1-c_2)\log_2(1+c_3+c_1-c_2)\\
&+\frac{1}{4}(1+c_3-c_1+c_2|)\log_2(1+c_3-c_1+c_2)\\
&-\frac{1}{2}(1+C)\log_2(1+C)-\frac{1}{2}(1-C)\log_2(1-C)\\
\end{split}
\end{equation}
which was first given in \cite{L} where $C=\max\{|c_i|\}$.
Note that the Werner state $
\rho=a|\psi^-\rangle\langle\psi^-|+\frac{1-a}{4}I$,
where $0\leqslant a\leqslant1$, is a special case with $r=s=0,c_3=-a,c_1=c_2=-a$.
\medskip

Example 2. Let $\rho$ be the following density matrix, which
is an example that cannot be treated by previous algorithms (cf. \cite{LMXW}). 
\begin{equation}\label{counterex}
\begin{pmatrix}
       0.0783&0&0&0\\
       0&0.1250&0.1000&0\\
       0&0.1000&0.1250&0\\
       0&0&0&0.6717\\
\end{pmatrix}.
\end{equation}
The eigenvalues of $\rho$ are $\lambda_1=0.025,\lambda_2=0.0783,\lambda_3=0.2250,\lambda_4=0.6717$. In terms of the Bloch form, $r=s=-0.5934,c_3=0.5,c_1=c_2=0.2$, so $c=0.2$.

Although $F(1)>F(0)$, $F(1)$ is not a maximal value. We can solve the exceptional solution easily by Eq. (\ref{newton}).
In fact, starting with $z_0=1$,
Eq. (\ref{newton}) gives that $z_1=0.9205, z_2=0.8884, z_3=0.8833, z_4=0.8831, z_5=0.88313, z_6=0.883131.$
Therefore $\hat{z}=0.88313$ is another critical point of $F(z)$ and the maximum value.
Finally the quantum discord turns out to be
$\mathcal Q(\rho)=2-\frac{1}{2}(1+s)\log_2(1+s)-\frac{1}{2}(1-s)\log_2(1-s)+\sum_{i=1}^{4}\lambda_i\log_2\lambda_i-
F(0.88313)=0.1328$.
\medskip

 Example 3. Let
 $\rho=\frac13\{(1-a)|00\rangle\langle00|+ 2|\psi^+\rangle\langle\psi^+|+a|11\rangle\langle11|\}$, where
 $|\psi^+\rangle=\frac1{\sqrt2}(|01\rangle+|10\rangle)$, $0\leq a\leq1$
considered in \cite{A}. The eigenvalues of $\rho$ are $\lambda_1=0,\lambda_2=\frac23,\lambda_3=\frac{1-a}{3},\lambda_4=\frac{a}{3}$.
 Here $r=\frac13-\frac23a,s=\frac13-\frac23a,c_3=-\frac13,c_1=c_2=\frac23,c=max\{|c_1|,|c_2|\}=\frac23$. It
 can be checked that $F(0)$ is the maximal value using $F'(z)$. 
 The behavior of $F'(z)$ is depicted in Fig. 1 as a function of $z$ and $a$.  The quantum discord is $\mathcal Q(\rho)=2-\frac{1}{2}(1+s)\log_2(1+s)-\frac{1}{2}(1-s)\log_2(1-s)+\sum_{i=1}^{4}\lambda_i\log_2\lambda_i-F(0)$.

\begin{figure}[!h]
  \centering
  \includegraphics[width=3.5in]{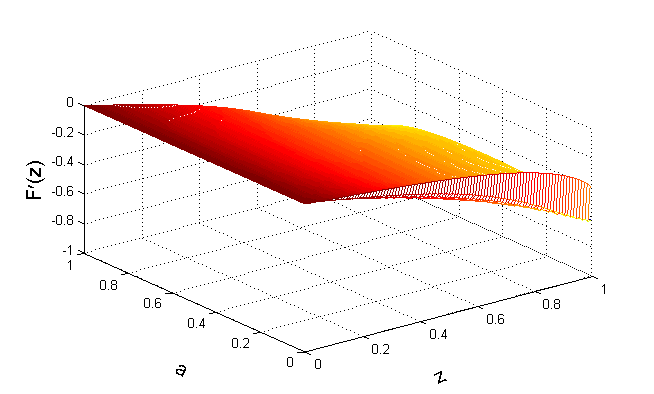}\\
  \caption{$F'(z)\leqslant 0$ in Ex. 3 and $F'(0)=0$ for all $a$.}\label{1}
\end{figure}

\section{Relationship with Concurrence}

Concurrence is an important measurement of entanglement. Its computation is a highly nontrivial problem for bipartite states.
In \cite{W1} the concurrence of the general 2-qubit $\rho$ are given in terms of the eigenvalues of $\rho$ and
an associated state $\tilde{\rho}$. In \cite{S}, the authors have computed the quantum discord of general rank two 2-qubit
in terms of the entanglement of formation using the Koashi-Winter relation.
In this section, we will show that their formula at the case of an X-state
is a special case of our general result in the case of rank two X-states.

Recall that the entanglement of formation (EoF) of a mixed state $\rho$ of two qubit is given by
\begin{equation}\label{eof}
E(\rho)=H(\frac{1+\sqrt{1-[\Con(\rho)]^2}}{2})
\end{equation}
where the function $H(x)$ is defined as
\begin{equation}
H(x)=-x\log_2x-(1-x)\log_2(1-x)
\end{equation}
for $x\in[0,1]$.
The concurrence of $\rho$ is computed by \cite{W1}
\begin{equation}
\Con(\rho)=\max\{0,\mu_1-\mu_2-\mu_3-\mu_4\}
\end{equation}
where the $\mu_i$'s in decreasing order are the square roots of the eigenvalues of the non-Hermitian matrix
$\rho\tilde{\rho}$, where $\tilde{\rho}=(\sigma_y\otimes\sigma_y)\rho^*(\sigma_y\otimes\sigma_y)$
and $\sigma_y$ is the Pauli spin matrix.

Let $\rho=\rho^{abc}$ be a tripartite state on the Hilbert space $H_a\otimes H_b\otimes H_c$.
Then two states $\rho_1$ and $\rho_2$ on $H_a\otimes H_b$ and $H_b\otimes H_c$ respectively
are called {\it complementary} if $\rho_1=\rho^{ab}=\Tr_c\rho$ and $\rho_2=\rho^{bc}=\Tr_a\rho$.
Then the Koashi-Winter relation \cite{KW} states that
\begin{equation}
C(\rho^{ab})+E(\rho^{bc})=S(\rho^b),
\end{equation}
from which one can express the quantum discord in terms of the concurrence using \eqref{eof}.

Recall that $\rho^{ab}$ is a mixed state of X-type:
\begin{equation}\label{rho1}
\rho=\frac14\left(
            \begin{array}{cccc}
              1+r+s+c_3 & 0 & 0 & c_1-c_2 \\
              0 & 1+r-s-c_3 & c_1+c_2 & 0 \\
              0 & c_1+c_2 & 1-r+s-c_3 & 0 \\
              c_1-c_2 & 0 & 0 & 1-r-s+c_3 \\
            \end{array}
          \right).
\end{equation}
Suppose that $\rho$ is of rank two. As the eigenvalues of
$\rho^{ab}$ are given in Eq. (\ref{Eq2.2}), there are only three possibilities:

(i) $\lambda_{1,2}=0, \lambda_{3,4}\neq0$. Then $c_3=1, r=s, c_1=-c_2$

(ii) $\lambda_{1,2}\neq0, \lambda_{3,4}=0.$ Then $c_3=-1, r=-s, c_1=c_2$.

(iii) $\lambda_{1,3}\neq0, \lambda_{2,4}=0$ (or $\lambda_{2,4}\neq0, \lambda_{1,3}=0$ which is similar).
In this case we have that
\begin{align*}
\sqrt{(r-s)^2+(c_1+c_2)^2}&=1-c_3,\\
\sqrt{(r+s)^2+(c_1-c_2)^2}&=1+c_3,
\end{align*}
subsequently
\begin{equation*}
\lambda_{1}=\frac{1}{2}(1-c_3),\lambda_{3}=\frac{1}{2}(1+c_3).
\end{equation*}

Cases (i) and (ii) can be treated similarly, so we look at case (i) closely.
Let's write the eigenstates of $\lambda_{3, 4}$ as
\begin{equation}
\ket{\varphi_3}=a_0\ket{0}\ket{0}+b_0\ket{1}\ket{1},
\ket{\varphi_4}=a_1\ket{1}\ket{0}+b_1\ket{1}\ket{1},
\end{equation}
where $|a_k|^2+|b_k|^2=1$ for $k=0,1$.
It follows that
\begin{gather*}
a_0=\pm\frac{r-\sqrt{r^2+c_1^2}}{\sqrt{{2(r^2+c_1^2)-2r\sqrt{r^2+c_1^2}}}},
b_0=\frac{|c_1|}{\sqrt{2(r^2+c_1^2)-2r\sqrt{r^2+c_1^2}}},\\
a_1=\pm\frac{r+\sqrt{r^2+c_1^2}}{\sqrt{{2(r^2+c_1^2)+2r\sqrt{r^2+c_1^2}}}},
b_1=\frac{|c_1|}{\sqrt{2(r^2+c_1^2)+2r\sqrt{r^2+c_1^2}}},
\end{gather*}
where $\pm$ is the sign of $c_1$.

Attaching a qubit on $H_c$ to the qubits $a$ and $b$, we obtain the purification of $\rho^{ab}$ as
\begin{equation}
\ket{\Psi}=\sqrt{\lambda_3}\ket{\varphi_3}\otimes\ket{0}+\sqrt{\lambda_4}\ket{\varphi_4}\otimes\ket{1}.
\end{equation}
Then the reduced state $\rho^{bc}=\Tr_A\ket{\Psi}\bra{\Psi}$ is
\begin{equation}
\rho^{bc}=\left(
            \begin{array}{cccc}
              \lambda_3a_0^2 & \sqrt{\lambda_3\lambda_4}a_0a_1 & 0 & 0 \\
              \sqrt{\lambda_3\lambda_4}a_0a_1 & \lambda_4a_1^2 & 0 & 0 \\
              0 & 0 & \lambda_3b_0^2 & \sqrt{\lambda_3\lambda_4}b_0b_1 \\
              0 & 0 & \sqrt{\lambda_3\lambda_4}b_0b_1 & \lambda_4b_1^2 \\
            \end{array}
          \right)
\end{equation}

The only nonzero $\mu_i$'s of $\rho^{bc}\tilde{\rho}^{bc}$ are $\sqrt{(b_0a_1-a_0b_1)^2\lambda_3\lambda_4}$
with multiplicity two. Thus $\Con(\rho^{bc})=0$, therefore $E(\rho^{bc})=0$ by Eq.({\ref{eof}}).
Subsequently their result of the quantum discord matches with our formula given by $F(0)$,
as $E(\rho^{bc})=1-\max F(z)$. Here one can see that the maximum is attained at $z=0$ by our formulas for
$F'(z)$ and $F''(z)$.

Next we look at case (iii). Write the eigenvectors of $\lambda_{1, 3}=\frac{1}{2}(1\mp c_3)$ as
\begin{equation}
\ket{\varphi_1}=a_0\ket{0}\ket{0}+b_0\ket{1}\ket{1},
\ket{\varphi_3}=a_1\ket{1}\ket{0}-b_1\ket{0}\ket{1},
\end{equation}
and $|a_k|^2+|b_k|^2=1$ for $k=0,1$. Then $\rho^{ab}$ is given by
\begin{equation}\label{rho2}
\rho^{ab}=
\left(
  \begin{array}{cccc}
    \omega_0a_0^2 & 0 & 0 & \omega_0a_0b_0 \\
    0 & \omega_1b_1^2 & -\omega_1a_1b_1 & 0\\
    0 & -\omega_1a_1b_1 &\omega_1a_1^2 & 0 \\
    \omega_0a_0b_0 & 0 & 0 & \omega_0b_0^2 \\
  \end{array}
\right)
\end{equation}
After attaching a qubit $c$ to the qubits $a$ and $b$, the purification of $\rho^{ab}$ is given by
\begin{equation}
\ket{\Psi}=\sqrt{\omega_0}\ket{\varphi_0}\otimes\ket{0}+\sqrt{\omega_1}\ket{\varphi_1}\otimes\ket{1},
\end{equation}
and the reduced state $\rho^{bc}$ is
\begin{equation}
\rho^{bc}=
\left(
  \begin{array}{cccc}
    \omega_0a_0^2 & 0 & 0 & -\sqrt{\omega_0\omega_1}a_0b_1 \\
    0 & \omega_1a_1^2 & \sqrt{\omega_0\omega_1}a_1b_0 & 0\\
    0 & \sqrt{\omega_0\omega_1}a_1b_0 & \omega_0b_0^2 & 0 \\
    -\sqrt{\omega_0\omega_1}a_0b_1 & 0 & 0 & \omega_1b_1^2 \\
  \end{array}
\right).
\end{equation}

The $\mu_i$'s of $\rho^{bc}\tilde{\rho}^{bc}$  are $\{2\sqrt{\omega_0\omega_1}a_0b_1, 2\sqrt{\omega_0\omega_1}a_1b_0, 0, 0\}$.
Thus the concurrence of $\rho{bc}$ is
\begin{equation}
\Con(\rho{bc})=2\sqrt{\omega_0\omega_1}|a_0b_1-b_0a_1|.
\end{equation}

From \eqref{rho2} and \eqref{rho1}, we have the following relations:
\begin{align*}
\omega_0a_0^2&=\frac14(1+r+s+c_3),\quad \omega_1b_1^2=\frac14(1+r-s-c_3),\\
\omega_0a_1^2&=\frac14(1-r+s-c_3),\quad\omega_1b_0^2=\frac14(1-r-s+c_3),\\
\omega_0a_0b_0&=\frac14(c_1-c_2),\quad \omega_1a_1b_1=-\frac14(c_1+c_2).
\end{align*}

Then $\Con^2(\rho^{bc})=\frac12(1+r^2-s^2-c_3^2-c_1^2+c_2^2)$, subsequently
\begin{align}\notag
E(\rho^{bc})=&-\frac{1+\sqrt{\frac12(1-r^2+s^2+c_3^2+c_1^2-c_2^2)}}{2}\log_2\frac{1+\sqrt{\frac12(1-r^2+s^2+c_3^2+c_1^2-c_2^2)}}{2}\\ \notag
&-\frac{1-\sqrt{\frac12(1-r^2+s^2+c_3^2+c_1^2-c_2^2)}}{2}\log_2\frac{1-\sqrt{\frac12(1-r^2+s^2+c_3^2+c_1^2-c_2^2)}}{2}\\
=&-\frac{1+\sqrt{s^2+c_1^2}}{2}\log_2\frac{1+\sqrt{s^2+c_1^2}}{2}-\frac{1-\sqrt{s^2+c_1^2}}{2}\log_2\frac{1-\sqrt{s^2+c_1^2}}{2}.
\end{align}

This matches again with our result that $1-\max F(z)=1-F(1)$ in the case, where it can be seen that
the maximum is indeed attained at $z=1$. Note that we need to switch $r$ and $s$ in our formula
as their paper conducts the measurement on particle $a$.

\section{Conclusions}

The quantum discord is one of the important quantum correlations, but hard to compute analytically as
it is an optimization problem over the set of von Neumann measures. For the general X-type quantum state,
we have reduced the optimization to that of a one variable
function $F(z)$ on $[0, 1]$. Several exact formulas
are given for various regions of the quantum state and an effective iterative algorithm is provided to
find the quantum discord in all situations.

Our results show that the quantum discord is usually given by $\max\{F(0), F(1)\}$, which agrees with previous
algorithms given by \cite{A, LMXW, CZYYO, LWF, Shi} (see also \cite{H}). As an example,
we check that the quantum discord of any rank two mixed state of $X$-type are
always given by either $F(1)$ or $F(0)$ as well as their relations with
the Koashi-Winter relation and entanglement of formation in the last section, where
we show that the results of \cite{S} matches exactly with the special cases of our result.
We remark that Theorem 2 is the first general result that establishes
rigorously the quantum discord is mostly given by
either $F(0)$ or $F(1)$.

As pointed out in \cite{LMXW} there are
counterexamples to many of the existing algorithms. To address this problem,
we have carefully discussed the optimization
on the boundary of the domain and found an iterative formula to compute exactly
the other possible optimal solutions, therefore completely solved the problem of the quantum discord
for the general X-type state. Using an example, our method is demonstrated to be able to
treat some quantum discord problem that cannot be solved analytically by other methods.

In summary, the problem of quantum discord for the general X-type two-qubits
(with respect to the von Neumann measurements) is completely settled in our paper.

\bigskip

\bigskip

\centerline{\bf Acknowledgments}
The first author thanks Shao-Ming Fei for helpful discussions on concurrence and
related problems.
The work is partially supported by
Simons Foundation grant No. 198129 and NSFC grant Nos. 11271138 and 11531004.

\bibliographystyle{amsalpha}

\end{document}